\definecolor{cornellred}{RGB}{179,27,27} 
\definecolor{cornellblue}{RGB}{00,00,170}
\definecolor{cornellgrey}{RGB}{96,94,92}
\newtheoremstyle{myplain}
  {9pt}
  {9pt}
  {\itshape}
  {\parindent}
  {\scshape}
  {:}
  {.5em}
  {}
\newtheoremstyle{mydefinition}
  {9pt}
  {9pt}
  {\itshape}
  {\parindent}
  {\scshape}
  {:}
  {.5em}
  {}
\newtheoremstyle{myremark}
  {9pt}
  {9pt}
  {}
  {\parindent}
  {\scshape}
  {:}
  {.5em}
  {}
\theoremstyle{myplain}
\newtheorem{theorem}{Theorem}
\theoremstyle{mydefinition}
\newtheorem{assumption}{Assumption}
\newtheorem{definition}{Definition}
\theoremstyle{myremark}
\newtheorem{remark}{Remark}
\renewcommand{\cite}{\citet}
\def\centerarc[#1](#2)(#3:#4:#5){ \draw[#1] ($(#2)+({#5*cos(#3)},{#5*sin(#3)})$) arc (#3:#4:#5);}
\numberwithin{equation}{section}
\begin{document}

\title{A Simple, Short, but Never-Empty Confidence Interval \\ for Partially Identified Parameters}
\date{\today}
\author{J\"{o}rg Stoye\thanks{Department of Economics, 
Cornell University, stoye@cornell.edu. Thanks to Johannes Haushofer, Jonathan de Quidt, and Chris Roth for an inquiry that motivated this work and for sharing and explaining their data. Financial support through NSF Grant SES-1824375 is gratefully acknowledged.}}

\maketitle
\begin{abstract}
This paper revisits the simple, but empirically salient, problem of inference on a real-valued parameter that is partially identified through upper and lower bounds with asymptotically normal estimators. A simple confidence interval is proposed and is shown to have the following properties:
\begin{itemize}
\item It is never empty or awkwardly short, including when the sample analog of the identified set is empty.
\item It is valid for a well-defined pseudotrue parameter whether or not the model is well-specified.
\item It involves no tuning parameters and minimal computation.
\end{itemize}
Computing the interval requires concentrating out one scalar nuisance parameter. In most cases, the practical result will be simple: To achieve $95\%$ coverage, report the union of a simple $90\%$ (!) confidence interval for the identified set and a standard $95\%$ confidence interval for the pseudotrue parameter.

For uncorrelated estimators --notably if bounds are estimated from distinct subsamples-- and conventional coverage levels, validity of this simple procedure can be shown analytically. The case obtains in the motivating empirical application \citep{Haushofer}, in which improvement over existing inference methods is demonstrated. More generally, simulations suggest that the novel confidence interval has excellent length and size control. This is partly because, in anticipation of never being empty, the interval can be made \textit{shorter} than conventional ones in relevant regions of sample space.
\end{abstract}
\vfill

\pagebreak
\onehalfspacing

\section{Introduction}

Inference under partial identification is by now the subject of a broad literature.\footnote{See \citet{Manski2003} for an early monograph, \citet{Tamer10} for a historical introductions, and \citet{CS17} and \citet{MolinariHOE} for recent surveys that extensively cover inference.} Only recently did attention turn to the following concern: If a partially identified model is  misspecified, this may manifest in either an empty or --and arguably worse-- in a misleadingly small confidence region. That is, misspecified inference can be spuriously precise.

The reason is that most confidence regions used in partial identification invert tests of $H_0:\theta \in \Theta_I$; here, $\theta$ is a parameter and $\Theta_I$ is the identified set. If $H_0$ is rejected at every $\theta$, the confidence region is empty. If $H_0$ is barely not rejected at a few parameter values, the confidence region may be very small. This issue is empirically relevant. For example, an empty sample analog of $\Theta_I$ occurs in \citet{Haushofer}, whose inquiry sparked the present research and whose data are reanalyzed below.

The literature on this issue is still young. \cite{PT11} provide an early diagnosis. \citet{KW13} propose a notion of pseudotrue identified set and an estimator thereof. \citet{MolinariHOE} explains the issue in detail and highlights it as important area for further investigation. The most thorough treatment is by \citet{AndrewsKwon19}, who emphasize the issue's importance and provide a general inference method that avoids spurious precision and ensures coverage of a pseudotrue identified set.

The present paper is in the spirit of \citet{AndrewsKwon19}. I focus on the simple but empirically salient case of a scalar parameter with upper and lower bounds whose estimators are jointly asymptotically normal. That is, I revisit the setting of \citet[][without their superefficiency assumption]{IM04} and \citet{Stoye09}. For this setting, I propose a confidence interval with the following features:
\begin{itemize}
\item It is never empty nor very short (a lower bound on its length is reported later).
\item It exhibits asymptotically guaranteed coverage uniformly over the identified set and additionally for a well-defined pseudotrue parameter.
\item It tends to be \textit{shorter} than more conventional intervals in benign cases, including in the empirical application.
\item It is free of tuning parameters and trivial to compute.
\end{itemize}
For target coverage of $95\%$ and for the special case of uncorrelated estimators, e.g. in this paper's empirical application, the confidence interval can be verbally defined as follows:
\begin{itemize}
\item Add $\pm 1.64$ standard errors to estimators of upper and lower bounds.
\item Also compute an average of the estimators that is weighted by their standard errors, as well as the corresponding standard error. Add $\pm1.96$ of those standard errors to the average.
\item Report the union of the intervals.
\end{itemize}
While this paper generally proposes a somewhat less ``cute" procedure with broader applicability, this specialized finding is probably the most striking part.\footnote{Full disclaimer: I discovered it by simulation and initially assumed a bug.} Neither of the above two intervals is valid by itself; it is just that their coverage events are correlated in exactly the right way.

Section \ref{sec:background} develops the proposal more formally and gives an intuition for why it works, though proofs are relegated to the Appendix. Section \ref{sec:numerical} provides a numerical illustration and Section \ref{sec:empirical} an application to the data that motivated this research. Section \ref{sec:conclusion} concludes.

\section{A Misspecification-Adaptive Confidence Interval}\label{sec:background}

While the interpretation of what follows is inference on a scalar parameter $\theta$, the only assumption is that one has well-behaved estimators of two other parameter values.
\begin{assumption}\label{as:1}
There exist estimators $(\hat{\theta}_L,\hat{\theta}_U)$ with probability limits $(\theta_L,\theta_U) \in \bm{R}^2$ such that
\begin{equation*}
\sqrt{n}\left( 
\begin{array}{c}
\hat{\theta}_L-\theta_L \\ 
\hat{\theta}_U-\theta_U
\end{array}
\right)
\overset{d}{\to}
N \left( \left(
\begin{array}{c}
0 \\ 
0
\end{array} \right),
\left(
\begin{array}{cc}
\sigma_L^2 & \rho \sigma_L \sigma_U \\ 
\rho \sigma_L \sigma_U & \sigma_U^2
\end{array} \right)
\right),
\end{equation*}
where $\sigma_L,\sigma_U >0$ and consistent estimators $(\hat{\sigma}_L,\hat{\sigma}_U,\hat{\rho})\overset{p}{\to}(\sigma_L,\sigma_U,\rho)$ are available.
\end{assumption}
The motivation is that the researcher estimates an identified set $\Theta_I \equiv [\theta_L,\theta_U]$ containing a true parameter value $\theta$. Assumption \ref{as:1} is unrestrictive if, as in the empirical application, $(\hat{\theta}_L,\hat{\theta}_U)$ are smooth functions of sample moments. It is unlikely to hold for intersection bounds \citep{AS13,CLR13} and will hold for bounds that result from projecting a higher-dimensional identified set \citep{BCS17,KMS19}, including components of partially identified vectors, only in benign cases.

The obvious estimator of $\Theta_I$ is $[\hat{\theta}_L,\hat{\theta}_U]$, but defining a confidence interval is delicate. Following \citet{IM04}, the literature mostly focuses on confidence intervals that (asymptotically) contain the true parameter value with prespecified probability $(1-\alpha)$, irrespective of its location in $\Theta_I$, i.e. confidence intervals that control $\inf_{\theta \in \Theta_I} \Pr(\theta \in CI)$. Finding such intervals is subtle because the nature of the testing problem qualitatively depends on the length $\Delta \equiv \theta_U-\theta_L$ of $\Theta_I$. Heuristically, this problem is one-sided if $\Delta$ is ``large" and two-sided if it is ``short," i.e. near point identification. Ascertaining which case obtains is subject to difficulties reminiscient of post-model selection inference \citep{LP05} and parameter-on-the-boundary issues \citep{Andrews00}.

The literature on how to circumvent this issue is by now considerable. Most approaches invert a test, that is, they report all values of $\theta$ for which $H_0:\theta \in \Theta_I$ was not rejected. Any such confidence set can be empty; in this paper's settings, that will happen if $\hat{\theta}_L$ is much larger than $\hat{\theta}_U$, where the meaning of ``much" varies across papers. This feature can be advertised as an embedded specification test but may not be wanted.\footnote{That was the sales pitch in \citet{Stoye09}, but not all referees were sold on it. The embedded specification test is analyzed in more detail by \citet{AS10}.} Arguably even more problematic is that, if the model is misspecified, a test inversion confidence interval can be short, suggesting precision when the true issue is misspecification. A specification test will not resolve this: In this paper's setting, the best-practice such test \citep{BCS15} just reports whether the test inversion interval is empty.\footnote{This equivalence does not generalize, but \citet{AndrewsKwon19} show that in ``slightly misspecified" parameter regimes, spuriously precise inference generally coexists with low power of specification tests.}  

Addressing this concern requires a notion of coverage for the case of misspecification, i.e. if $\theta_L>\theta_U$. Following \citet{AndrewsKwon19}, define the pseudotrue identified set
\begin{eqnarray*}
\Theta_I^* &\equiv &\Theta_I \cup \{\theta^*\} \\
\theta^* &\equiv &\frac{\sigma_U \theta_L + \sigma_L \theta_U}{\sigma_L+\sigma_U}.
\end{eqnarray*}
This definition is natural because $\Theta_I^*=\arg\min_\theta \max\{(\theta-\theta_U)/\sigma_U,(\theta_L-\theta)/\sigma_L,0\}$; thus, $\Theta_I^*$ is the estimand implied by the frequent choice of $\max\{(\theta-\hat{\theta}_U)/\hat{\sigma}_U,(\hat{\theta}_L-\theta)/\hat{\sigma}_L,0\}$ as test statistic. Note also that $\Theta_I^*$ is never empty and that $\Theta_I^*=\Theta_I$ whenever $\Theta_I \neq \emptyset$.

The revised notion of validity of a confidence interval is as follows:
\begin{definition} \label{def:valid}
A confidence interval CI has asymptotic coverage of $(1-\alpha)$ if
$$\lim_{n \to \infty}  \inf_{\theta \in \Theta_I^*} \Pr(\theta \in CI) \geq 1-\alpha.$$
\end{definition}
Forcing coverage of $\theta^*$ will ensure that the interval is nonempty and also that it is statistically interpretable as targeting $\Theta_I^*$. An obvious caveat is that, as with the related literature going back to \citet{White82}, the coverage target's substantive relevance may not be clear if the model is in fact misspecified. As \citet{AndrewsKwon19} elaborate, this has to be traded off against concerns with spurious precision.

While the coverage notion exactly mimics \citet{AndrewsKwon19}, the confidence interval will be quite different. It goes ``back to basics" in that, like early entries in the literature \citep{IM04,Stoye09}, it essentially just adds a certain number of standard errors to estimated bounds. An advantage is computational and conceptual simplicity; with test inversion intervals, critical values generally depend on $\theta$ even in this simple setting and therefore must be computed many times. However, the main motivation is that the new interval performs well. Its heuristic definition is as follows:
\begin{itemize}
\item Compute an interval
$$CI_{\Theta_I} \equiv \bigl[\hat{\theta}_L-\tfrac{\hat{\sigma}_L}{\sqrt{n}}\hat{c},\hat{\theta}_U+\tfrac{\hat{\sigma}_U}{\sqrt{n}}\hat{c}\bigr],$$
where $\hat{c}$ depends on $\alpha$ and $\hat{\rho}$; see Table \ref{tb:1}.
\item Also compute the estimator
\begin{eqnarray*}
\hat{\theta}^* &\equiv &\frac{\hat{\sigma}_U \hat{\theta}_L + \hat{\sigma}_L \hat{\theta}_U}{\hat{\sigma}_L+\hat{\sigma}_U}
\end{eqnarray*}
and confidence interval
\begin{eqnarray*}
CI_{\theta^*} &\equiv &\left[\hat{\theta}^*-\tfrac{\hat{\sigma}^*}{\sqrt{n}}\Phi^{-1}\left(1-\tfrac{\alpha}{2}\right) ,\hat{\theta}^*+\tfrac{\hat{\sigma}^*}{\sqrt{n}}\Phi^{-1}\left(1-\tfrac{\alpha}{2}\right) \right] \\
\hat{\sigma}^* &\equiv & \frac{\hat{\sigma}_L \hat{\sigma}_U \sqrt{2+2\hat{\rho}}}{\hat{\sigma}_L+\hat{\sigma}_U}.
\end{eqnarray*} 
\item Report the union $CI_{\Theta_I} \cup CI_{\theta^*}$.
\item We will not pre-estimate $\Delta$ but set it to its globally least favorable value. We will, however, anticipate the conservative bias ensuing from taking unions of intervals. This bias is easy to estimate; in particular, there is no parameter-on-the-boundary issue.
\item One might think that concentrating out $\Delta$ will be very conservative. It turns out that this is not so. In most cases, $\hat{c}=\Phi^{-1}(1-\alpha)$, i.e. we can just use the one-sided critical value, at least to extremely high simulation accuracy. If $\rho=0$ and for conventional coverage levels, this can be shown analytically.
\end{itemize}
The new confidence interval is obviously never empty; indeed, its length cannot drop below $2\hat{\sigma}^*\Phi^{-1}(1-\alpha/2)$. Its formal definition and theoretical justification are as follows.
\begin{definition}
The misspecification-adaptive confidence interval $CI_{MA}$ is
\begin{equation}
CI_{MA} \equiv \left[\hat{\theta}_L-\tfrac{\hat{\sigma}_L}{\sqrt{n}} \hat{c} ,  \hat{\theta}_U+\tfrac{\hat{\sigma}_U}{\sqrt{n}} \hat{c} \right] \cup \left[\hat{\theta}^*-\tfrac{\hat{\sigma}^*}{\sqrt{n}}\Phi^{-1}\left(1-\tfrac{\alpha}{2}\right),\hat{\theta}^*+\tfrac{\hat{\sigma}^*}{\sqrt{n}}\Phi^{-1}\left(1-\tfrac{\alpha}{2}\right)\right], \label{def:ci}
\end{equation}
where $\hat{c}$ is the unique value of $c$ solving
\begin{eqnarray}
& \inf_{\Delta \geq 0}&  \Pr \left(Z_1-\Delta-c \leq 0 \leq Z_2+c   \text{    or    } \left\vert Z_1+Z_2-\Delta \right\vert \leq  \sqrt{2+2\hat{\rho}}\Phi^{-1}\left(1-\tfrac{\alpha}{2}\right)  \right)  = 1-\alpha, \notag \\
&& \left(\begin{array}{c}Z_1\\Z_2\end{array}\right) \sim  N\left(\left(\begin{array}{c}0\\0\end{array}\right),\left(\begin{array}{cc}1 & \hat{\rho}\\ \hat{\rho} & 1 \end{array}\right)\right). \label{eq:ci}
\end{eqnarray}
If $\rho=0$ is known and $\sqrt{2}\Phi^{-1}(1-\alpha)\geq \Phi^{-1}(1-\alpha/2)$, just set $\hat{c}=\Phi^{-1}(1-\alpha)$.
\end{definition}
\begin{remark}
The condition that $\sqrt{2}\Phi^{-1}(1-\alpha)\geq \Phi^{-1}(1-\alpha/2)$ holds for $\alpha<.14$, i.e. for coverage levels of $86\%$ or higher.
\end{remark}
\begin{theorem} \label{thm:1}
The confidence interval $CI_{MA}$ achieves asymptotic coverage of $(1-\alpha)$.
\end{theorem}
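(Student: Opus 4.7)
The plan is to do a case analysis on the sign of $\Delta \equiv \theta_U - \theta_L$, identify the asymptotic least favorable parameter inside $\Theta_I^*$ in each case, and invoke the defining equation \eqref{eq:ci} of $\hat c$. As a preliminary, I would observe that by Assumption~\ref{as:1} and continuity of the solution of \eqref{eq:ci} in its correlation argument, $\hat c \to c^*$ in probability, where $c^*$ solves \eqref{eq:ci} with $\hat\rho$ replaced by $\rho$. I would also observe that letting $\Delta \to +\infty$ in \eqref{eq:ci} shows that its LHS converges to $\Phi(c^*)$, since the strip probability vanishes while the rectangle event collapses to $\{Z_2 \geq -c^*\}$; because the defining infimum equals $1-\alpha$, this yields $\Phi(c^*) \geq 1-\alpha$.

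For the well-specified case $\Delta > 0$, $\Theta_I^* = [\theta_L, \theta_U]$. For any fixed interior $\theta$, both defining inequalities of $CI_{\Theta_I}$ become slack and coverage tends to $1$; the only sequences in $\Theta_I^*$ that matter are those approaching an endpoint. Parameterizing $\theta_n = \theta_L + c/\sqrt n$ with $c \geq 0$, the first constraint asymptotically becomes $Z_L \leq c^* + c/\sigma_L$ while the second becomes automatically true (since $\sqrt n\Delta \to \infty$), and $CI_{\theta^*}$ shrinks around $\theta^* \neq \theta_L$ and contributes nothing in the limit. Asymptotic coverage therefore equals $\Phi(c^* + c/\sigma_L)$, minimized at $c = 0$; by symmetry the same lower bound applies near $\theta_U$. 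Hence $\lim_n \inf_{\theta \in \Theta_I^*} \Pr(\theta \in CI_{MA}) = \Phi(c^*) \geq 1-\alpha$.

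For the point-identified case $\Delta = 0$, $\Theta_I^* = \{\theta^*\} = \{\theta_L\} = \{\theta_U\}$, and continuous mapping gives that coverage of $\theta^*$ by $CI_{\Theta_I}$ converges to the rectangle event in \eqref{eq:ci} at $\Delta = 0$, while coverage by $CI_{\theta^*}$ converges to the strip event at $\Delta = 0$ (using $\sqrt n(\hat\theta^* - \theta^*)/\hat\sigma^* \Rightarrow N(0,1)$ via the delta method applied to $\hat\theta^*$, which one can check yields precisely the variance $\sigma^{*2}$ defined in the paper). So $\Pr(\theta^* \in CI_{MA})$ converges to the LHS of \eqref{eq:ci} at $\Delta = 0$, which is $\geq 1-\alpha$ by construction. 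For the misspecified case $\Delta < 0$, $\Theta_I^* = \{\theta^*\}$ and $CI_{\theta^*}$ is a standard Wald interval centered at $\hat\theta^*$ with the correct asymptotic standard error, so $\Pr(\theta^* \in CI_{\theta^*}) \to 1-\alpha$ and coverage by $CI_{MA}$ is at least that.

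The hardest step will be the bookkeeping in the well-specified case: one must confirm that no sequence $\theta_n$ in $\Theta_I^*$ approaching an endpoint at rate $1/\sqrt n$ produces asymptotic coverage below $\Phi(c^*)$. The local parameterization above settles this via monotonicity of Gaussian rectangle probabilities in their corner coordinates—relaxing any constraint only increases the probability—but the argument requires some care to glue together the regime where the second constraint is already slack with the regime where it is not. A secondary subtlety is that, had the theorem been formulated uniformly over DGPs (allowing $\Delta$ to drift to zero at the $1/\sqrt n$-rate), both the rectangle and the strip event in \eqref{eq:ci} would be active simultaneously in the endpoint analysis and the full infimum structure of \eqref{eq:ci} would be invoked; as Definition~\ref{def:valid} is pointwise in the DGP, it suffices to dominate $1-\alpha$ by the two values $P(c^*, 0)$ and $P(c^*, \infty) = \Phi(c^*)$.
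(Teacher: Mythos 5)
Your case analysis is internally consistent and does establish the literal, pointwise-in-the-DGP statement of Definition \ref{def:valid}: for fixed $\Delta<0$ or $\Delta=0$ the limits you compute are correct, and for fixed $\Delta>0$ the worst sequences $\theta_n$ approaching an endpoint at the $1/\sqrt n$ rate indeed give $\Phi(c^*)\ge 1-\alpha$. But the ``secondary subtlety'' you set aside in your final paragraph is in fact the main content of the theorem, and discarding it leaves a genuine gap relative to what the paper proves and what the construction is for. Under your reading, only the two values of \eqref{eq:ci} at $\Delta=0$ and $\Delta\to\infty$ ever matter, so $\hat c=\Phi^{-1}(1-\alpha)$ would be valid for \emph{every} $\rho$ --- which contradicts Table \ref{tb:1} and Figure \ref{fig:2}, where for $\rho=0.95$ the required critical value strictly exceeds the one-sided one because coverage is minimized at a small positive (localized) $\Delta$. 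That discrepancy is the usual pointwise-versus-uniform artifact: your argument is exactly the kind of pointwise justification that Remark \ref{rem:uniform} and the surrounding literature reject, since the failure occurs along drifting sequences $\Delta_n=h/\sqrt n$ where the rectangle and strip events interact and the full infimum over $\Delta\in[0,\infty)$ in \eqref{eq:ci} is what guards against it.

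The paper's proof accordingly does three things your proposal never touches. First, it lower-bounds the limiting coverage by $\inf_{\Delta\ge 0,\,\lambda\in[0,1]}\Pr(E_{\Delta,\lambda,c^*})$, with $\lambda$ indexing $\theta=\lambda\theta_U+(1-\lambda)\theta_L$, so that all local sequences in both the nuisance parameter and the position of $\theta$ are handled at once. Second --- and the paper flags this as not obvious --- it must show that the feasible calibration \eqref{eq:ci}, which evaluates coverage only at $\lambda=1$, nevertheless delivers that global infimum; this is done by rotating to $(X_1,X_2)=\bigl((Z_2^*+Z_1^*)/\sqrt2,(Z_2^*-Z_1^*)/\sqrt2\bigr)$ and using log-concavity of the normal density to show that $\lambda\in\{0,1\}$ are equally least favorable once $\Delta$ is concentrated out. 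Third, for $\rho=0$ it proves that $\Pr(E_{\Delta,1,c})$ is unimodal in $\Delta$, so the infimum over $\Delta$ equals $\min\{P(0),\lim_{\Delta\to\infty}P(\Delta)\}=\Phi(c)$ and the shortcut $\hat c=\Phi^{-1}(1-\alpha)$ is justified; this delicate analytic step (the comparison of the terms $A$ and $B$ in the derivative together with a Mills-ratio bound) genuinely uses the condition $\sqrt2\,\Phi^{-1}(1-\alpha)\ge\Phi^{-1}(1-\alpha/2)$, which your argument never invokes. To repair your proposal you would need to prove coverage along arbitrary drifting sequences $(\Delta_n,\lambda_n)$, which is precisely the paper's Steps 1--3.
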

\begin{proof}
See appendix \ref{sec:proof}.
\end{proof}
\begin{table}
\centering
\begin{tabular}
{llllllll}
$\bm{\rho}$    & $\leq$0.8 & 0.85 &   0.9 & 0.95 & 0.98 & 0.99 & 1.0
\\[0.75ex]
\hline\\[-1.5ex]
$\bm{\alpha=.1}$ &  1.28 & 1.29 &  1.31 & 1.36 & 1.44 & 1.54&  1.64\\
$\bm{\alpha=.05}$  &  1.64 &1.65 &  1.65 & 1.70 & 1.76 & 1.81 & 1.96\\
$\bm{\alpha=.01}$  &  2.33 & 2.33 & 2.33 & 2.34 & 2.40 & 2.43 & 2.58
\end{tabular}
\caption{Critical values obtained by concentrating out $\Delta \in [0,\infty)$ for different coverages and correlations. For $\rho \leq 0.8$, further simulations corroborate the one-sided critical value as exact solution.}
\label{tb:1}
\end{table}
Expression \eqref{eq:ci} is numerically evaluated for different values of $\rho$ and target coverages in Table \ref{tb:1}. In particular, simulation with very high accuracy suggests that $\hat{c}$ is just the one-sided critical value for $\rho$ up to at least $.8$; it then gradually increases toward the two-sided critical value, which is easily seen to solve \eqref{eq:ci} for $\hat{\rho}=1$.\footnote{The table was generated by gridding and using $B=4000000$ simulations. This is feasible on a run-of-the-mill netbook. The relevant simulation error is the coverage error at the suggested $\hat{c}$. Given $B$, it will be much smaller than what is routinely accepted in simulation-based, e.g. bootstrap, inference. For $\rho \leq .8$, further simulations establish to high accuracy that coverage is first increasing and then decreasing in $\Delta$ and minimized as $\Delta \to \infty$, the same feature that is analytically proved for $\rho=0$ and which justifies $\hat{c}=\Phi^{-1}(1-\alpha)$.}  


\begin{remark} \label{rem:1}
Except for large positive $\rho$, infimal coverage of $(1-\alpha)$ is attained in the limit as $\Delta \to \infty$. For finite $\Delta$, $CI_{MA}$ is therefore nominally conservative.

In principle, one could try to capture this by concentrating out $\Delta$ over a more limited range, e.g. over a $(1-.1\alpha)$-confidence interval with Bonferroni adjustment of subsequent inference. This could in principle lead to $\hat{c}<\Phi^{-1}(1-\alpha)$. I do not advocate it because numerically, the infimum in \eqref{eq:ci} is well approximated for surprisingly small values of $\Delta$. Therefore, the ``inferential cost" of a pre-test, whether through adjustment of second-stage test size or through reliance on a tuning parameter, would typically not be recovered.     
\end{remark}

\begin{remark} \label{rem:uniform}
The literature on partial identification often focuses on uniform inference. This is because na\"ive inference methods may fail in cases of interest, e.g. as one approaches point identification. To prevent this, the literature has an informal requirement that inference be uniform over delicate nuisance parameters like (in this paper) $\Delta$; see \citet[][Section 4.3.2]{MolinariHOE} for further discussion. $CI_{MA}$ is obviously uniform in this sense because $\Delta$ (and also the position of $\theta$ in $\Theta_I$) is set to its globally least favorable value.

To formally claim that inference is uniform over a large class of data generating processes, one would furthermore have to strengthen Assumption \ref{as:1} so that consistency and asymptotic normality of bound estimators hold in a uniform sense. The exact nature of such strengthenings, and low-level assumptions that achieve them, are well understood \citep{AS10,RS08} and are omitted for brevity.
\end{remark}
\begin{remark} \label{rem:im}
The notable difference in setting to \citet{IM04} is the absence of an implicit superefficiency condition on $\hat{\Delta}$ near true value $0$. That condition turns out to obtain if (and, in practice, only if) $\hat{\theta}_U \geq \hat{\theta}_L$ \textit{by construction} \citep[][Lemma 3]{Stoye09}. This case is empirically relevant: It applies to most missing-data bounds and also bounds that rely on different truncations of observed probability measures \citep{HM95,Lee09}, unless further refinements turn these into intersection bounds. If it obtains and other regularity conditions hold, the confidence interval in \citet{IM04} is valid, is expected to be rather efficient for small $\Delta$ (because it uses superconsistency of $\hat{\Delta}$), and will obviously never be empty. Not coincidentally, this case is also characterized by the possibility of $\rho \approx 1$; indeed, that is how superconsistency of $\hat{\Delta}$ arises. Whether this case applies can be ascertained before seeing any data, and I strongly suggest that users do so.
\end{remark}
\begin{remark}
I follow the bulk of the literature in focusing on uniform coverage of $\theta \in \Theta_I^*$. The procedure is easily adapted to coverage of the entire set $\Theta_I^*$. Note that, by a Bonferroni argument, a critical value of $\hat{c}=\Phi^{-1}(1-\alpha/2)$ would always do, and also that (as can be seen from considering large $\Delta$) only a large negative value of $\hat{\rho}$ would cause $\hat{c}$ to be appreciably lower. 
\end{remark}
The proof of Theorem \ref{thm:1} contains three steps. First, it is relatively routine to show that $CI_{MA}$ would be valid if, in line with the heuristic definition, expression \eqref{eq:ci} explicitly took the infimum also over values of $(\sigma_L,\sigma_U)$ as well as $\theta \in \Theta_I$. In a second step, we can concentrate out all of these. In particular, one can restrict attention to one of $\theta=\theta_L$ or $\theta=\theta_U$; expression \eqref{eq:ci} arbitrarily chooses the latter. This finding is not obvious: For given $\Delta$, coverage is \textit{not} equally minimized at the interval's endpoints; it is only that the corresponding infima over $\Delta \in [0,\infty)$ are the same. As final flourish in this step, it turns out that asymptotic coverage at $\theta_U$ depends on $(\Delta,\sigma_L,\sigma_U)$ only through $\Delta/\sigma_L$. For the purpose of evaluating worst-case coverage over $\Delta \geq 0$, we can therefore set both standard deviations to $1$.

The final, and by far most delicate, step is that if $\rho=0$, coverage is provably minimized as $\Delta \to \infty$, justifying use of the one-sided critical value $\hat{c}=\Phi^{-1}(1-\alpha)$. To appreciate this claim, consider again the two components of $CI_{MA}$ in \eqref{def:ci}. For $\alpha=.05$, the left-hand interval's coverage for either $\theta_L$ or $\theta_U$ may be as low as $.9$ if $\Delta=0$ and approach $.95$ \textit{from below} as $\Delta \to \infty$. The right-hand interval's coverage of these values is $.95$ at $\Delta=0$ (where both coincide with $\theta^*$) but rapidly decreases to $0$ as $\Delta$ increases. That these effects aggregate to coverage uniformly above $.95$ is far from obvious and heavily relies on specific features of the bivariate Normal distribution.     

Numerically, the final step extends to moderate $\rho$ (see again Table \ref{tb:1}), and the proof uses conservative bounds. Some analytic result of higher generality might, therefore, be available. However, for large positive $\rho$, coverage is minimized at small positive $\Delta$. Therefore, if $\rho$ is unknown, estimating it cannot be avoided. In particular, in view of Table \ref{tb:1}, a pre-test for ``small enough" $\rho$ would be counterproductive: Since $\hat{c}$ as a function of $\rho$ is mostly completely flat, one would be unlikely to recover the inferential cost (in the sense of Remark \ref{rem:1}) of the pre-test.

\begin{figure}[t]
\begin{subfigure}[b]{.5\linewidth}
			\begin{adjustbox}{max width=\textwidth}
\includegraphics[trim=4.5cm 8cm 4.5cm 8.8cm,clip=true,scale=.5]{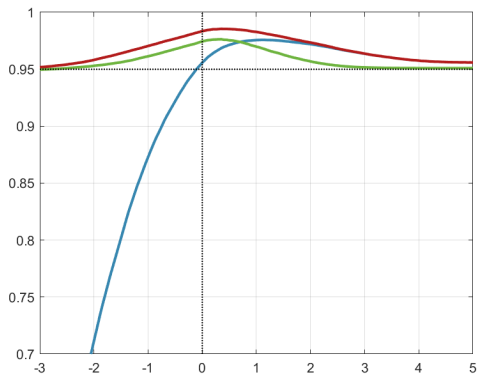}
\end{adjustbox}
\caption{Coverage when $\rho=0$.}
		\end{subfigure}%
		\begin{subfigure}[b]{.5\linewidth}
		\begin{adjustbox}{max width=\textwidth}
\includegraphics[trim=4.5cm 8cm 4.5cm 8.8cm,clip=true,scale=.5]{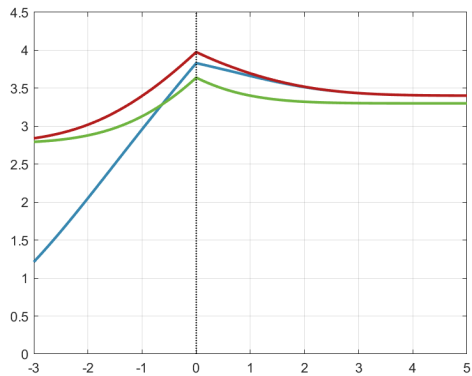}
\end{adjustbox}
\caption{Expected length when $\rho=0$.}
\end{subfigure}
\begin{subfigure}[b]{.5\linewidth}
			\begin{adjustbox}{max width=\textwidth}
\includegraphics[trim=4.5cm 8cm 4.5cm 8.8cm,clip=true,scale=.5]{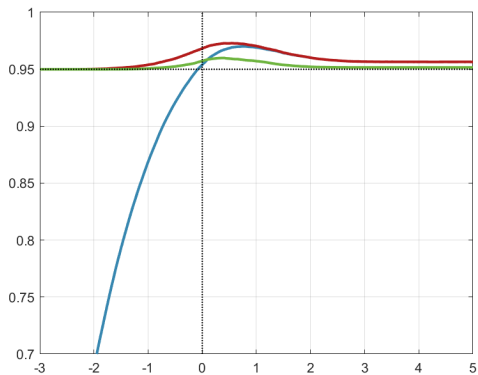}
\end{adjustbox}
\caption{Coverage when $\rho=0.7$.}
\end{subfigure}
		\begin{subfigure}[b]{.5\linewidth}
		\begin{adjustbox}{max width=\textwidth}
\includegraphics[trim=4.5cm 8cm 4.5cm 8.8cm,clip=true,scale=.5]{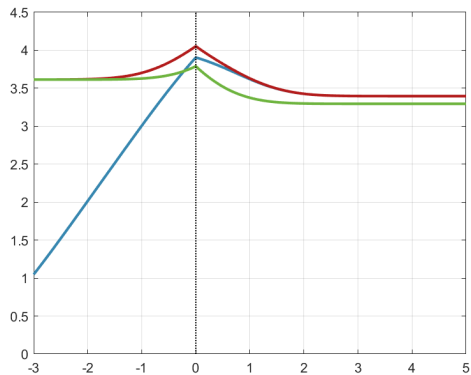}
\end{adjustbox}
\caption{Expected length when $\rho=0.7$.}
\end{subfigure}
\caption{Coverage (left panels) and expected length (right panels; length of true interval is subtracted) of $CI_{TI}$ (blue), $CI_{TI} \cup CI_{\theta^*}$ (red) and the new proposal $CI_{MA}$ (green). Horizontal axis is $\Delta=\theta_U-\theta_L$; negative values indicate increasing misspecification.  Nominal coverage is $95\%$ and is indicated by a black horizontal line.}
\label{fig:1}
\end{figure}

\begin{figure}[t]
\begin{subfigure}[b]{.5\linewidth}
			\begin{adjustbox}{max width=\textwidth}
\includegraphics[trim=4.5cm 8cm 4.5cm 8.8cm,clip=true,scale=.5]{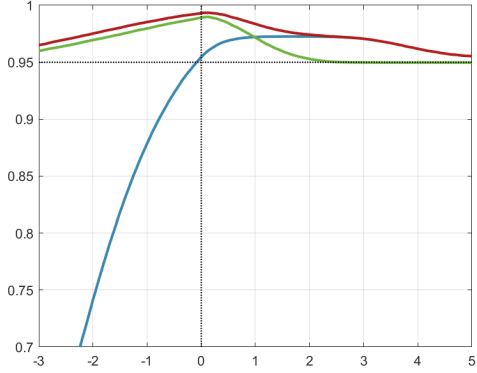}
\end{adjustbox}
\caption{Coverage when $\rho=-0.7$.}
\end{subfigure}
		\begin{subfigure}[b]{.5\linewidth}
		\begin{adjustbox}{max width=\textwidth}
\includegraphics[trim=4.5cm 8cm 4.5cm 8.8cm,clip=true,scale=.5]{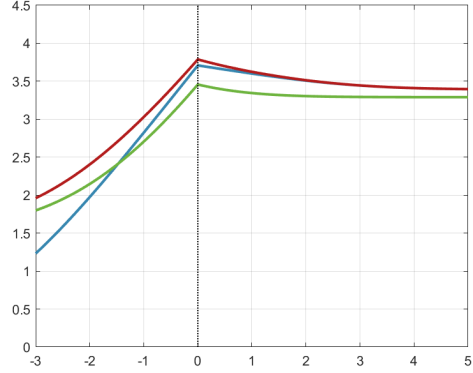}
\end{adjustbox}
\caption{Expected length when $\rho=-0.7$.}
\end{subfigure}
\begin{subfigure}[b]{.5\linewidth}
			\begin{adjustbox}{max width=\textwidth}
\includegraphics[trim=4.5cm 8cm 4.5cm 8.8cm,clip=true,scale=.5]{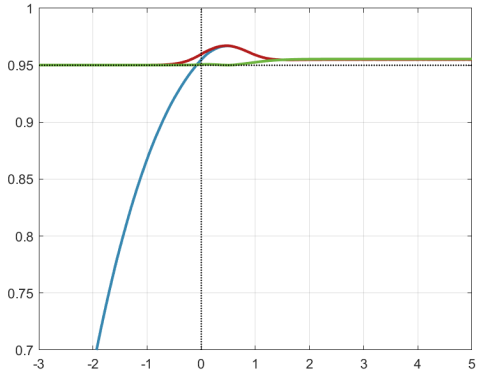}
\end{adjustbox}
\caption{Coverage when $\rho=0.95$.}
\end{subfigure}
		\begin{subfigure}[b]{.5\linewidth}
		\begin{adjustbox}{max width=\textwidth}
\includegraphics[trim=4.5cm 8cm 4.5cm 8.8cm,clip=true,scale=.5]{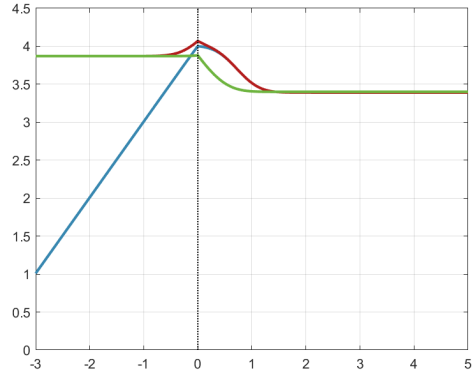}
\end{adjustbox}
\caption{Expected length when $\rho=0.95$.}
\end{subfigure}
\caption{Continuation of Figure \ref{fig:1}. The last case ($\rho=.95$) illustrates a setting where $\Delta \to \infty$ is not least favorable and where $\hat{c}>1.64$.}
\label{fig:2}
\end{figure}

\section{Numerical Illustration} \label{sec:numerical}
Figures \ref{fig:1} and \ref{fig:2} compare $CI_{MA}$ with a test inversion interval $CI_{TI}$ that arguably reflects the state of the established literature.\footnote{The interval closely follows \citet{RSW14}; other established methods \citep{AS10,AB12,Bugni10,Canay10} would inform similar constructions. As of writing of this manuscript, at least two  rather distinct (from the preceding and from each other) proposals are in the pipeline \citep{APR19,CS19}. Both invert a test and can be empty; \citet{APR19} also has a tuning parameter. They are compared in \citet{CS19}. A comparison of all these approaches in simple examples might be worthwhile.} It inverts a test of $H_0:\theta \leq \theta_U,\theta \geq \theta_L$ by taking the maximum (studentized) violation as test statistic, i.e. the same test statistic that generally implies $\Theta_I^*$ as pseudotrue identified set. The critical value is based on a pre-test --specifically, a one-sided $(.1\alpha)$-Wald test-- that potentially discards one of the inequality constraints as nonbinding. Depending on the pre-test's result, the critical value is then either a simple one-sided critical value or computed by a simulation that takes $\rho$ into account. In either case, the second-stage test is of size $.9\alpha$, so that the pre-test is accounted for by Bonferroni correction. The resulting test is inverted, and the critical value is recomputed, as $\theta$ changes, making the interval considerably shorter than early entries in the literature \citep{IM04,Stoye09}. Compared to $CI_{MA}$, test inversion adds orders of magnitude of computational cost, though at a very low absolute level. I abstract from asymptotic approximation by drawing estimators straight from limiting distributions and taking $(\sigma_L,\sigma_U,\rho)$ to be known. Interval length $\Delta$ is denominated in estimator standard errors because $\sqrt{n}\sigma_L=\sqrt{n}\sigma_U=1$ throughout.

The comparison is extended into the misspecified range by letting $\Delta$ take on negative values. The test inversion interval obviously undercovers in that range. To clarify comparisons, I also compute $CI_{TI} \cup CI_{\theta^*}$. Recall that $CI_{MA}$ can be loosely intuited as refining this construction by adjusting the critical value to account for union-taking. Nominal coverage is $95\%$ throughout.

Figure \ref{fig:1} illustrates the results for $\rho=0$ (top panels) and $\rho=.7$ (bottom panels); Figure \ref{fig:2} extends the exercise to $\rho=-.7$ and finally to $\rho=.95$. The last case is arguably contrived but serves to illustrate that $\Delta\to \infty$ is not always least favorable. By the same token, this is the only case in which $\hat{c}>\Phi^{-1}(.95)$.

With one caveat discussed below, the figures suggest dominating performance of $CI_{MA}$: It is shorter, and this is also reflected in more precise size control and thereby more power of the implied test. The advantage is especially apparent for small positive $\Delta$. What happens here is that the correction provided by $CI_{\theta^*}$ allows $CI_{MA}$ to transition to just adding $1.64$ standard errors considerably more quickly than a pre-test could justify. Indeed, for $\rho\leq .4$, this transition occurs at a \textit{negative} estimated interval length $\hat{\Delta}$; that is, $CI_{MA}$ just adds $1.64$ standard errors to bounds estimates whenever these are ordered in the expected way. The slight advantage of $CI_{MA}$ for large $\Delta$ reflects that $CI_{TI}$ accounts for a pre-test.

One might wonder how \citet{AndrewsKwon19} would perform in the example. While the exact answer depends on choice of multiple tuning parameters, some qualitative considerations are as follows. Their interval starts from $CI_{TI}$ and expands it in order to avoid spurious precision.\footnote{\citet{AndrewsKwon19} implement $CI_{TI}$ through \citet{AS10} but point out that \citet{RSW14} could be used instead. The difference will be small in the present setting.} As a result, it will be bounded from below in both length and coverage by the blue curves in Figures \ref{fig:1} and \ref{fig:2}. In an initial refinement, \citet{AndrewsKwon19} form the union between $CI_{TI}$ and a never-empty confidence interval. Their preferred confidence interval does this only if an additional pre-test fails to reject misspecification. While this mitigates the effect of expanding $CI_{TI}$, the final confidence interval still contains $CI_{TI}$ and considerably exceeds it for small positive $\Delta$ (see their Section 8.1, whose setting resembles the present one). This will obviously be reflected in its statistical performance. Conversely, an intriguing feature of $CI_{MA}$ is that it ``spends" the ``coverage capital" gained from ensuring nonemptiness by being shorter than $CI_{TI}$ for interesting values of $\Delta$. In fairness to \citet{AndrewsKwon19}, it appears far from obvious how to implement such a feature in their much more general setting.

The advantage of $CI_{MA}$ fades out, and even reverses, in the special case where $\rho\to 1$ but \textit{not} $\Delta \to 0$. In that limit, $\hat{c}$ will converge to the two-sided critical value, whereas a pre-test will eventually recommend a one-sided test. While such scenarios can obviously be simulated, they arguably are contrived. The possibility of high $\rho$ and correspondingly precise estimation of $\Delta$ is empirically relevant, but it corresponds to the superefficiency case discussed in Remark \ref{rem:im} and therefore to small $\Delta$ as well as to a case distinction that can be decided in pre-data analysis. Also, one could in principle fix this issue by layering a pre-test on top of $CI_{MA}$; however, as general advice in this matter, I stand by Remark \ref{rem:1}.

\section{Empirical Application} \label{sec:empirical}
\Citet{Haushofer} estimate upper and lower bounds on behavioral parameters from different treatments in a between-subjects design, meaning that estimators are uncorrelated. At the same time, bounds can and did in fact invert, triggering an inquiry by the authors that led to the present paper.

\begin{table}
\centering
\begin{tabular}{lcccc}
\textbf{Game} &  $\bm{[\hat{\theta}_L,\hat{\theta}_U]}$ & $\bm{CI_{MA}}$ & $\bm{CI_{TI}}$ & \textbf{rel. length} 
\\[0.75ex]
\hline\\[-1.5ex]
Ambiguity Aversion &  [0.499,0.557] & [0.459,0.597] & [0.458,0.598] & 0.97 \\
Effort: 1 cent bonus &  [0.469,0.484] & [0.448,0.503] & [0.448,0.504] & 0.97\\
Effort: 0 cent bonus$^*$ &  [0.343,0.331] & [0.318,0.356] & [0.315,0.358] & 0.91\\
Lying$^{**}$ &  [0.530,0.537] & [0.512,0.556] & [0.508,0.560] & 0.83\\
Time$^{**}$ &  [0.766,0.770] & [0.722,0.814] & [0.712,0.824] & 0.82\\
Trust Game 1  & [0.430,0.455] & [0.388,0.493] & [0.387,0.495] & 0.96\\
Trust Game 2 & [0.348,0.398] & [0.328,0.426] & [0.327,0.427] & 0.97\\
Ultimatum Game 1& [0.443,0.470] & [0.422,0.493] & [0.422,0.494] & 0.97\\
Ultimatum Game 2 & [0.362,0.413] & [0.342,0.436] & [0.341,0.436] & 0.97
\end{tabular}
\caption{Confidence intervals applied to data in \citet[][compare select columns of their Table 1]{Haushofer}. Relative length refers to relative (of $CI_{MA}$ over $CI_{TI}$) excess length beyond $\max\{\hat{\Delta},0\}$. Of special interest: Case (*) has inverted bound estimators, displayed with abuse of interval notation. Cases (**) are short (near point identified) estimated intervals.}
\label{tb:2}
\end{table}

Table \ref{tb:2} displays estimated bounds, $CI_{MA}$, and $CI_{TI}$ for selected instances of the ``weak bounds" data. This refers to a baseline setting before inducing experimenter demand. For more details, I refer to \citet{Haushofer}, particularly Figure 1 and corresponding explanations. The last column divides the length of $CI_{MA}$ by the length of $CI_{TI}$, subtracting $\max\{\hat{\Delta},0\}$ from both. Both intervals make full use of $\rho=0$ being known.

The comparison is between $CI_{MA}$ and $CI_{TI}$; obviously, $CI_{TI}\cup CI_{\theta^*}$ would be larger than $CI_{TI}$. The data include one case (*) where bound estimators are inverted and where ex post, $CI_{MA}=CI_{\theta^*}$.\footnote{This case would not have led any specification test to reject the model, even before taking multiple hypothesis testing into account.} There are also two cases (**) of short estimated intervals (relative to standard errors), i.e. of near point identification. Because $CI_{MA}$ cannot be empty, one might have conjectured it to be the longer one in these cases. In fact, it is noticeably shorter in all of them -- the effect of ``spending coverage capital" from the nonemptiness correction dominates. In all other cases, both intervals effectively add $1.64$ standard errors.\footnote{In those cases, the small differences favoring $CI_{MA}$ reflect Bonferroni adjustment for pre-tests, i.e. the specifics of \citet{RSW14}. In cases where $[\theta_L,\theta_U]$ is obviously ``long," researchers will in practice be tempted to claim an asymptotic pre-test and just use $1.64$ standard errors.}

\section{Conclusion}\label{sec:conclusion}
For a simple, but empirically relevant, partial identification problem, I propose a confidence interval that has competitive size control and length including in the misspecified case, while being extremely easy to compute. The most striking finding is that in many cases, a seemingly crude fix to a nominal $90\%$ confidence interval ensures $95\%$ coverage at little cost in terms of interval length and with practically zero computation. Simulations are encouraging, and the confidence interval improves on current best practice in application to recent lab experiments.

The approach is complementary to \citet{AndrewsKwon19}, from whom I take the broad motivation as well as the novel coverage requirement. Of course, their approach applies far beyond the present paper's simple setting. On the other hand, it has several tuning parameters and expands a conventional confidence interval, whereas the present proposal is tuning parameter free and compensates for expanding the conventional interval by reducing its standalone nominal coverage. A question of obvious interest, but also beyond my current reach, is whether this last feature can be usefully generalized. As it stands, the present proposal is limited to a specific setting but appears both practical and powerful when that setting obtains.

\newpage
\appendix
\section{Proof of Theorem \ref{thm:1}} \label{sec:proof}
Validity in case of $\Delta<0$ is obvious -- only coverage of $\theta^*$ is required in this case, and $CI_{\theta^*}$ achieves that by itself. Since for $\Delta \geq 0$, we have $\Theta_I^*=\Theta_I$, it remains to show coverage of $\theta \in [\theta_L,\theta_U]$ assuming that $\theta_U \geq \theta_L$. For the remainder of this proof, express the true value of $\theta$ as $\theta=\lambda \theta_U + (1-\lambda) \theta_L$ for some $\lambda \in [0,1]$. Consider initially the idealized confidence interval $CI^*_{MA}$, which is just like $CI_{MA}$ except that, rather than by \eqref{eq:ci}, a critical value $c^*$ is defined by setting $\inf_{\Delta\geq 0,\lambda \in [0,1]}\Pr (E_{\Delta,\lambda,c^*}) = 1-\alpha$, where
\begin{eqnarray}
E_{\Delta,\lambda,c} &=& \left\{ Z^*_1 - \frac{\lambda}{\sigma_L} \Delta \leq c \cap Z^*_2 + \frac{1-\lambda}{\sigma_U} \Delta \geq - c \right\} \label{eq:calibrate_proof} \\
& \cup &\left\{ Z^*_1+  Z^*_2 + \left( \frac{1-\lambda}{\sigma_U}- \frac{\lambda}{\sigma_L} \right) \Delta  \in \left[-\sqrt{2+2\rho} \Phi^{-1}\left(1-\tfrac{\alpha}{2}\right),\sqrt{2+2\rho} \Phi^{-1}\left(1-\tfrac{\alpha}{2}\right)\right] \right\}, \notag  \\
&& \left(\begin{array}{c}Z^*_1\\Z^*_2\end{array}\right) \sim  N\left(\left(\begin{array}{c}0\\0\end{array}\right),\left(\begin{array}{cc}1 & \rho\\ \rho & 1 \end{array}\right)\right).\notag
\end{eqnarray}
Note two differences to \eqref{eq:ci}: The construction explicitly minimizes over both $\Delta$ and $\lambda$, and it is infeasible in that population values of $(\sigma_L,\sigma_U,\rho)$ are used.

Step 1 of the proof establishes validity of $CI^*_{MA}$. Step 2 shows that $\lambda$ can always be set to $1$, transforming the above into \eqref{eq:ci}. Step 3 establishes that if $\rho=0$, one can furthermore take the limit as $\Delta \to \infty$. The argument that $(\sigma_L,\sigma_U,\rho)$ can be replaced with consistent estimators is omitted for brevity. 

\paragraph{Step 1: Validity of $CI_{MA}^*$.}
Write   
\begin{equation*}
CI^*_{MA} = \left[\hat{\theta}_L-\frac{\sigma_L}{\sqrt{n}} c^*,\hat{\theta}_U+\frac{\sigma_U}{\sqrt{n}} c^*\right] \cup \left[\frac{\sigma_L \hat{\theta}_U + \sigma_U \hat{\theta}_L}{\sigma_L + \sigma_U} -  \frac{\sigma^*}{\sqrt{n}}\Phi^{-1}\left(1-\tfrac{\alpha}{2}\right), \frac{\sigma_L \hat{\theta}_U + \sigma_U \hat{\theta}_L}{\sigma_L + \sigma_U} + \frac{\sigma^*}{\sqrt{n}}\Phi^{-1}\left(1-\tfrac{\alpha}{2}\right) \right],
\end{equation*}
where $\sigma^*\equiv \sqrt{2+2\rho}\sigma_L \sigma_U / (\sigma_L+\sigma_U)$ is the asymptotic standard deviation of $\sqrt{n}(\lambda^*\hat{\theta}_U+(1-\lambda^*)\hat{\theta}_L-\theta^*)$ and $\lambda^*\equiv \sigma_L/(\sigma_L+\sigma_U)$ is the mixture weight characterizing $\theta^*$.

Define also standardized estimation errors
\begin{equation*}
(\bar{\varepsilon}_L,\bar{\varepsilon}_U)\equiv \sqrt{n}\left(\frac{\hat{\theta}_L-\theta_L}{\sigma_L},\frac{\hat{\theta}_U-\theta_U}{\sigma_U}\right).
\end{equation*}
We have that $\theta \in CI^*_{MA}$ if either
\begin{eqnarray*}
&& \hat{\theta}_L-\frac{\sigma_L}{\sqrt{n}} c^* \leq \lambda \theta_U + (1-\lambda) \theta_L \leq \hat{\theta}_U+\frac{\sigma_U}{\sqrt{n}} c^* \\
& \Longleftrightarrow & \hat{\theta}_L-\theta_L \leq \lambda \Delta + \frac{\sigma_L}{\sqrt{n}} c^*, ~~~~ \hat{\theta}_U-\theta_U \geq -(1-\lambda)\Delta - \frac{\sigma_U}{\sqrt{n}} c^* \\
& \Longleftrightarrow & \bar{\varepsilon}_L \leq \frac{\lambda}{\sigma_L}\sqrt{n} \Delta + c^* , ~~~~ \bar{\varepsilon}_U \geq -\frac{1-\lambda}{\sigma_U}\sqrt{n}\Delta - c^*
\end{eqnarray*}
or
\begin{eqnarray*}
&& \frac{\sigma_L \hat{\theta}_U + \sigma_U \hat{\theta}_L}{\sigma_L+\sigma_U} -(\lambda \theta_U + (1-\lambda) \theta_L) \in \left[- \frac{\sigma^*}{\sqrt{n}} \Phi^{-1}\left(1-\tfrac{\alpha}{2}\right), \frac{\sigma^*}{\sqrt{n}} \Phi^{-1}\left(1-\tfrac{\alpha}{2}\right)\right] \\
&\Longleftrightarrow & \frac{\sigma_L \left(\theta_U+\frac{\sigma_U \bar{\varepsilon}_U}{\sqrt{n}}\right) + \sigma_U \left(\theta_L+\frac{\sigma_L \bar{\varepsilon}_L}{\sqrt{n}}\right)}{\sigma_L+\sigma_U} -(\lambda \theta_U + (1-\lambda) \theta_L) \in \left[- \frac{\sigma^*}{\sqrt{n}} \Phi^{-1}\left(1-\tfrac{\alpha}{2}\right), \frac{\sigma^*}{\sqrt{n}} \Phi^{-1}\left(1-\tfrac{\alpha}{2}\right)\right] \\
&\Longleftrightarrow & \frac{\sigma_L \sigma_U}{\sigma_L + \sigma_U} (\bar{\varepsilon}_L+ \bar{\varepsilon}_U) + \sqrt{n} \left(\frac{\sigma_L \theta_U + \sigma_U \theta_L}{\sigma_L+\sigma_U}   -(\lambda \theta_U + (1-\lambda) \theta_L)\right)  \in \Bigl[-\sigma^* \Phi^{-1}\left(1-\tfrac{\alpha}{2}\right),\sigma^* \Phi^{-1}\left(1-\tfrac{\alpha}{2}\right)\Bigr] \\
&\Longleftrightarrow & \bar{\varepsilon}_L+ \bar{\varepsilon}_U +\sqrt{n}\frac{\sigma_L \theta_U + \sigma_U \theta_L - (\sigma_L+\sigma_U)(\lambda \theta_U + (1-\lambda) \theta_L)}{\sigma_L \sigma_U} \\
&& ~~~~~~~~~~~~~~~~~~~~~~~~~~~~  \in \left[-\sqrt{2+2\rho} \Phi^{-1}\left(1-\tfrac{\alpha}{2}\right),\sqrt{2+2\rho} \Phi^{-1}\left(1-\tfrac{\alpha}{2}\right)\right] \\
 &\Longleftrightarrow & \bar{\varepsilon}_L + \bar{\varepsilon}_U + \left( \frac{1-\lambda}{\sigma_U}-\frac{\lambda}{\sigma_L} \right) \sqrt{n}\Delta  \in \left[-\sqrt{2+2\rho} \Phi^{-1}\left(1-\tfrac{\alpha}{2}\right),\sqrt{2+2\rho} \Phi^{-1}\left(1-\tfrac{\alpha}{2}\right)\right].
\end{eqnarray*}
In sum,
\begin{eqnarray*}
&& \Pr(\theta \in CI^*_{MA}) \\
&=& \Pr \left( \left\{ \bar{\varepsilon}_L - \frac{\lambda}{\sigma_L} \sqrt{n}\Delta \leq c^* \cap \bar{\varepsilon}_U + \frac{1-\lambda}{\sigma_U} \sqrt{n}\Delta \geq - c^* \right\}\right.  \\
&& \cup \left. \left\{ \bar{\varepsilon}_L + \bar{\varepsilon}_U + \left( \frac{1-\lambda}{\sigma_U}-\frac{\lambda}{\sigma_L} \right) \sqrt{n}\Delta   \in \left[-\sqrt{2+2\rho} \Phi^{-1}\left(1-\tfrac{\alpha}{2}\right),\sqrt{2+2\rho} \Phi^{-1}\left(1-\tfrac{\alpha}{2}\right)\right] \right\}\right) \\
&\to & \Pr \left( \left\{ Z_1^* - \frac{\lambda}{\sigma_L} \sqrt{n}\Delta \leq c^* \cap Z_2^* + \frac{1-\lambda}{\sigma_U} \sqrt{n}\Delta \geq - c^* \right\}\right.  \\
&& \cup \left. \left\{ Z_1^*  +  Z_2^* + \left( \frac{1-\lambda}{\sigma_U}- \frac{\lambda}{\sigma_L}\right)  \sqrt{n}\Delta   \in \left[-\sqrt{2+2\rho} \Phi^{-1}\left(1-\tfrac{\alpha}{2}\right),\sqrt{2+2\rho} \Phi^{-1}\left(1-\tfrac{\alpha}{2}\right)\right] \right\}\right) \\
&\geq &  \inf_{\Delta \geq 0,\lambda \in [0,1]}  \Pr \left( \left\{ Z_1^* - \frac{\lambda}{\sigma_L} \Delta \leq c^* \cap Z_2^* + \frac{1-\lambda}{\sigma_U} \Delta \geq - c^* \right\}\right.  \\
&& \cup \left. \left\{ Z_1^*  +  Z_2^* + \left( \frac{1-\lambda}{\sigma_U}- \frac{\lambda}{\sigma_L}\right)  \Delta   \in \left[-\sqrt{2+2\rho} \Phi^{-1}\left(1-\tfrac{\alpha}{2}\right),\sqrt{2+2\rho} \Phi^{-1}\left(1-\tfrac{\alpha}{2}\right)\right] \right\}\right) \\
&= & 1-\alpha,
\end{eqnarray*}
where the convergence uses Assumption $\ref{as:1}$ and the next step uses the definition of $c^*$ and also observes that, since we take an infimum over $\Delta \geq 0$, we can drop the $\sqrt{n}$ premultiplying $\Delta$.

\paragraph{Step 2: Concentrating out $\lambda$.}
We first concentrate out $\lambda$, for which $\{0,1\}$ are equally least favorable if $\Delta$ is unrestricted. To see this, consider the reparameterization
\begin{equation}
(X_1,X_2) \equiv \left(\frac{Z_2^*+Z_1^*}{\sqrt{2}},\frac{Z_2^*-Z_1^*}{\sqrt{2}}\right) \Longleftrightarrow   (Z_1^*,Z_2^*) = \left(\frac{X_1-X_2}{\sqrt{2}},\frac{X_1+X_2}{\sqrt{2}}\right) \label{eq:reparameterize}
\end{equation}
and observe that $(X_1,X_2)$ are uncorrelated. Simple algebra yields
\begin{eqnarray*}
E_{\Delta,\lambda,c} &=& \left\{ X_1-X_2 - \frac{\lambda}{\sigma_L}\sqrt{2} \Delta \leq \sqrt{2} c \cap X_1+X_2 + \frac{1-\lambda}{\sigma_U} \sqrt{2} \Delta \geq - \sqrt{2} c \right\} \notag \\
& \cup & \left\{ X_1 + \left(\frac{1-\lambda}{\sigma_U}  - \frac{\lambda}{\sigma_L}\right) \frac{\Delta}{\sqrt{2}}   \in \left[- \sqrt{1+\rho}\Phi^{-1}\left(1-\tfrac{\alpha}{2}\right), \sqrt{1+\rho}\Phi^{-1}\left(1-\tfrac{\alpha}{2}\right)\right] \right\}.
\end{eqnarray*}
Consider minimizing $\Pr(E_{\Delta,\lambda,c} )$ subject to the constraint that
\begin{equation*}
\Delta = \frac{\sigma_L \sigma_U}{\lambda \sigma_U+(1-\lambda)\sigma_L }  \beta  
\end{equation*}
for some fixed value $\beta \geq 0$. This is without loss of generality since one can minimize over $\beta$ in a second step and every value of $(\Delta,\lambda)\in [0,\infty) \times [0,1]$ is consistent with some $\beta \geq 0$. Also rearranging expressions to be of form ``$\ldots\leq X_1 \leq \dots$", one can write
\begin{eqnarray*}
&& E_{\Delta,\lambda,c}\Vert_{\Delta = \frac{\sigma_L \sigma_U}{\lambda \sigma_U+(1-\lambda)\sigma_L}  \beta  } \\
&=& \left\{ - X_2  -\sqrt{2}c - \frac{(1-\lambda)\sigma_L}{\lambda \sigma_U+(1-\lambda)\sigma_L } \sqrt{2} \beta \leq X_1 \leq X_2 + \sqrt{2}c  + \frac{\lambda \sigma_U}{\lambda \sigma_U+(1-\lambda)\sigma_L } \sqrt{2}\beta  \right\} \notag \\
& \cup & \left\{  \frac{\lambda \sigma_U-(1-\lambda)\sigma_L}{\lambda \sigma_U+(1-\lambda)\sigma_L} \times \frac{\beta}{\sqrt{2}}- \sqrt{1+\rho}\Phi^{-1}\left(1-\tfrac{\alpha}{2}\right) \leq X_1 \leq  \frac{\lambda \sigma_U-(1-\lambda)\sigma_L}{\lambda \sigma_U+(1-\lambda)\sigma_L} \times \frac{\beta}{\sqrt{2}} + \sqrt{1+\rho}\Phi^{-1}\left(1-\tfrac{\alpha}{2}\right) \right\}
\end{eqnarray*}
and therefore
\begin{eqnarray*}
&& \Pr(E_{\Delta,\lambda,c}|X_2=x_2)\Vert_{\Delta = \frac{\sigma_L \sigma_U}{\lambda \sigma_U+(1-\lambda)\sigma_L}  \beta  } \\
&=& \Pr\left(X_1 \in \left[-x_2  -\sqrt{2}c - \frac{(1-\lambda)\sigma_L}{\lambda \sigma_U+(1-\lambda)\sigma_L } \sqrt{2}\beta, x_2 + \sqrt{2}c  + \frac{\lambda \sigma_U}{\lambda \sigma_U+(1-\lambda)\sigma_L } \sqrt{2}\beta  \right] \right. \\ 
&\cup &\left. \left[  \frac{\lambda \sigma_U-(1-\lambda)\sigma_L}{\lambda \sigma_U+(1-\lambda)\sigma_L} \times \frac{\beta}{\sqrt{2}} - \sqrt{1+\rho}\Phi^{-1}\left(1-\tfrac{\alpha}{2}\right) ,  \frac{\lambda \sigma_U-(1-\lambda)\sigma_L}{\lambda \sigma_U+(1-\lambda)\sigma_L} \times \frac{\beta}{\sqrt{2}} + \sqrt{1+\rho}\Phi^{-1}\left(1-\tfrac{\alpha}{2}\right) \right] \right)
\end{eqnarray*}
with the understanding that the first interval above is empty for small enough $x_2$.

Irrespective of the value taken by $x_2$, both intervals are centered at $\frac{\lambda \sigma_U-(1-\lambda)\sigma_L}{\lambda \sigma_U+(1-\lambda)\sigma_L} \times \frac{\beta}{\sqrt{2}}$, an expression that increases in $\lambda$ and takes value $0$ at $\lambda=\lambda^*$. The intervals' length does not depend on $\lambda$, and their union coincides with the larger of the two (whose identity depends on $x_2$). Again irrespective of the value of $x_2$, $X_1$ is distributed normally around $0$. By log-concavity of the Normal distribution (or by taking derivatives), the above probability therefore increases in $\lambda$ up to $\lambda^*$ and decreases in $\lambda$ thereafter conditionally on any $x_2$, hence also unconditionally. Furthermore, plugging in $\lambda \in \{0,1\}$ reveals symmetry about $0$: Switching $\lambda$ from $0$ to $1$ is equivalent to leaving $\lambda$ unchanged but replacing $X_1$ with $-X_1$. The probabilities of all intervals in the above display are , therefore, equally minimized at $\lambda \in \{0,1\}$ (although these minima correspond to different $\Delta$). This establishes that, if both of $(\Delta,\lambda)$ are concentrated out globally, one can restrict attention to one of $\lambda=0$ or $\lambda=1$.

We finally observe that the way in which $\sigma_L$ enters
\begin{eqnarray*}
E_{\Delta,1,c} = \left\{ Z^*_1 - \frac{\Delta}{\sigma_L}  \leq c \cap Z^*_2 \geq - c \right\}   \cup \left\{ Z^*_1+  Z^*_2 - \frac{\Delta}{\sigma_L}   \in \left[-\sqrt{2+2\rho} \Phi^{-1}\left(1-\tfrac{\alpha}{2}\right),\sqrt{2+2\rho} \Phi^{-1}\left(1-\tfrac{\alpha}{2}\right)\right] \right\}
\end{eqnarray*}
allows the simplification 
\begin{eqnarray*}
&& \inf_{\Delta \geq 0} \Pr(E_{\Delta,1,c}) \\
& =& \inf_{\Delta \geq 0}\Pr\left( \bigl\{ Z^*_1 - \Delta  \leq c \cap Z^*_2 \geq - c \bigr\}   \cup \left\{ Z^*_1+  Z^*_2 - \Delta  \in \left[-\sqrt{2+2\rho} \Phi^{-1}\left(1-\tfrac{\alpha}{2}\right),\sqrt{2+2\rho} \Phi^{-1}\left(1-\tfrac{\alpha}{2}\right)\right] \right\}\right).
\end{eqnarray*}

\paragraph{Step 3: For $\rho=0$, concentrating out $\Delta$.} 
For the remainder of this proof, suppose $\rho=0$. In view of step 2, also restrict attention to $\lambda=1$. This step's main claim is that $\Pr(E_{\Delta,1,c})$ is first increasing and then decreasing (possibly, although not in fact, all increasing or all decreasing) in $\Delta \geq 0$. Suppose the claim is true, then it follows that $\inf_{\Delta \in [0,\infty)}\Pr(E_{\Delta,1,c})$ is attained either at $\Delta=0$ or as $\Delta \to \infty$. In the former case, $\theta_U=\theta^*$, so that $CI_{MA}$ is obviously conservative. The latter limit is easily seen to equal $1-\alpha$, and this is indeed the (unattained) infimal coverage. 

It remains to show the main claim. Write $\gamma=\sqrt{2}\Phi^{-1}(1-\alpha/2)$, then (also using $\rho=0$) we have
\begin{equation*}
E_{\Delta,1,c} = \bigl\{ Z_1^* - \Delta \leq c \cap Z_2^* \geq - c \bigr\}  \cup \bigl\{ Z_1^*+Z_2^*  - \Delta  \in [-\gamma,\gamma] \bigr\},
\end{equation*}
where $(Z_1^*,Z_2^*)$ is bivariate standard Normal. We will henceforth think of $\Pr(E_{\Delta,1,c})$ as function of $\Delta$ with $(c,\gamma)$ fixed. Note that the condition on critical values translates as $2c \geq\gamma$. 

Using $\Phi(\cdot)$ and $\phi(\cdot)$ for the standard normal distribution and density functions, write
\begin{eqnarray*}
\Pr(E_{\Delta,1,c} | Z_2^*=z_2) = \begin{cases} \Phi(\gamma+\Delta-z_2)-\Phi(-\gamma+\Delta-z_2), & z_2 < -c \\ \Phi(\gamma+\Delta-z_2), & -c \leq z_2 \leq -c+\gamma \\ \Phi(\Delta+c), & z_2 > -c+\gamma \end{cases}
\end{eqnarray*}
and therefore (the last step below will be elaborated after the display)
\begin{eqnarray}
&& \frac{d\Pr(E_{\Delta,1,c})}{d\Delta} \notag \\
&=& \frac{d\int_{-\infty}^\infty \Pr(E_{\Delta,1,c} | Z_2^*=z_2)\phi(z_2)dz_2}{d\Delta} \notag \\
&=& \int_{-\infty}^{-c+\gamma} \phi(\gamma+\Delta-z_2) \phi(z_2) dz_2 - \int_{-\infty}^{-c} \phi(-\gamma+\Delta-z_2) \phi(z_2) dz_2 + \int_{-c+\gamma}^{\infty} \phi(\Delta+c) \phi(z_2) dz_2  \notag \\
&=& \underset{A}{\underbrace{\sqrt{2} \left(\phi\left(\frac{\gamma+\Delta}{\sqrt{2}}\right)-\phi\left(\frac{-\gamma+\Delta}{\sqrt{2}}\right)\right)\Phi\left(\frac{\gamma-\Delta-2c}{\sqrt{2}}\right)}} + \underset{B}{\underbrace{\phi(\Delta+c)\Phi(c-\gamma)}}. \label{eq:AB}
\end{eqnarray}
To see the last step, note first that $\int_{-c+\gamma}^{\infty} \phi(\Delta+c) \phi(z_2) dz_2$ simplifies to $B$. Next, $$(Z_1^*,Z_2^*)=(\gamma+\Delta-z_2,z_2)\Leftrightarrow (X_1,X_2)=\left(\frac{\gamma+\Delta}{\sqrt{2}},\frac{2z_2-\gamma-\Delta}{\sqrt{2}}\right),$$
where $(X_1,X_2)$ is as in \eqref{eq:reparameterize}. Because $\rho=0$ implies that $(X_1,X_2)$ is standard normal, we have
\begin{multline*}
\int_{-\infty}^{-c+\gamma} \phi(\gamma+\Delta-z_2) \phi(z_2) dz_2 = \int_{-\infty}^{-c+\gamma} \phi\left(\frac{\gamma+\Delta}{\sqrt{2}}\right) \phi\left(\frac{2z_2-\gamma-\Delta}{\sqrt{2}}\right) dz_2 \\
= \sqrt{2} \int_{-\infty}^{(\gamma-\Delta-2c)/\sqrt{2}} \phi\left(\frac{\gamma+\Delta}{\sqrt{2}}\right)\phi(t)dt = \sqrt{2} \phi\left(\frac{\gamma+\Delta}{\sqrt{2}}\right) \Phi\left(\frac{\gamma-\Delta-2c}{\sqrt{2}}\right).
\end{multline*}
A similar computation for $\int_{-\infty}^{-c} \phi(-\gamma+\Delta-z_2) \phi(z_2) dz_2$ and rearrangement of terms yield term $A$ in \eqref{eq:AB}.

Term $A$ equals zero at $\Delta=0$ and then becomes negative. Term $B$ is positive throughout. Because all terms vanish as $\Delta \to \infty$, it is not useful to directly take further derivatives. However, we can compare the terms' relative magnitude. In particular, we will see that $|A|/|B|$ increases in $\Delta$, hence $d\Pr(E_{\Delta,1,c})/d\Delta$ has at most one sign change and that sign change (if it occurs) is from positive to negative, establishing the claim.

To see monotonicity of $|A|/|B|$, write
\begin{eqnarray*}
\frac{|A|}{|B|} &=& \sqrt{2} \times \frac{\phi\left(\frac{-\gamma+\Delta}{\sqrt{2}}\right)-\phi\left(\frac{\gamma+\Delta}{\sqrt{2}}\right)}{\phi(\Delta+c)} \times \frac{\Phi\left(\frac{\gamma-\Delta-2c}{\sqrt{2}}\right)}{\Phi(c-\gamma)} \\
&=& \sqrt{2} \times \frac{\exp\left(-\frac{1}{4}\left(\gamma^2+\Delta^2-2\gamma \Delta\right)\right)-\exp\left(-\frac{1}{4}\left(\gamma^2+\Delta^2+2\gamma \Delta\right)\right)}{\exp\left(-\frac{1}{2}\left(\Delta^2+c^2+2\Delta c\right)\right)} \times \frac{\Phi\left(\frac{\gamma-\Delta-2c}{\sqrt{2}}\right)}{\Phi(c-\gamma)}  \\
&= &  \biggl(\exp\left(\tfrac{\Delta^2}{4}+\Delta c + \tfrac{\gamma \Delta}{2}\right)-\exp\left(\tfrac{\Delta^2}{4}+\Delta c - \tfrac{\gamma \Delta}{2}\right)\biggr) \Phi\left(\frac{\gamma-\Delta-2c}{\sqrt{2}}\right) \times \text{const.},
\end{eqnarray*}
where ``const." absorbs terms that do not depend on $\Delta$. The derivative of this expression with respect to $\Delta$ (and dropping the multiplicative constant) is
\begin{eqnarray*}
&& \underset{C}{\underbrace{\left(\frac{\Delta+2c+\gamma}{2}\exp\left(\tfrac{\Delta^2}{4}+\Delta c + \tfrac{\gamma \Delta}{2}\right)-\frac{\Delta+2c-\gamma}{2}\exp\left(\tfrac{\Delta^2}{4}+\Delta c - \tfrac{\gamma \Delta}{2}\right)\right)}} \Phi\left(\frac{\gamma-\Delta-2c}{\sqrt{2}}\right) \\
&&- \frac{1}{\sqrt{2}} \biggl(\exp\left(\tfrac{\Delta^2}{4}+\Delta c + \tfrac{\gamma \Delta}{2}\right)-\exp\left(\tfrac{\Delta^2}{4}+\Delta c - \tfrac{\gamma \Delta}{2}\right)\biggr) \phi\left(\frac{\gamma-\Delta-2c}{\sqrt{2}}\right) \\
&\geq & \left(\frac{\Delta+2c+\gamma}{2}\exp\left(\tfrac{\Delta^2}{4}+\Delta c + \tfrac{\gamma \Delta}{2}\right)-\frac{\Delta+2c-\gamma}{2}\exp\left(\tfrac{\Delta^2}{4}+\Delta c - \tfrac{\gamma \Delta}{2}\right)\right)  \frac{\frac{\Delta+2c-\gamma}{\sqrt{2}}}{\left(\frac{\Delta+2c-\gamma}{\sqrt{2}}\right)^2+1}\phi\left(\frac{\Delta+2c-\gamma}{\sqrt{2}}\right) \\
&&- \frac{1}{\sqrt{2}} \biggl(\exp\left(\tfrac{\Delta^2}{4}+\Delta c + \tfrac{\gamma \Delta}{2}\right)-\exp\left(\tfrac{\Delta^2}{4}+\Delta c - \tfrac{\gamma \Delta}{2}\right)\biggr) \phi\left(\frac{\gamma-\Delta-2c}{\sqrt{2}}\right),
\end{eqnarray*}
using that $C\geq 0$ and $\Phi(-t)\geq \frac{t}{t^2+1} \phi(t)$. In order to sign this, divide through by $\phi(\dots)$ (both are the same by symmetry of $\phi(\cdot)$) as well as by $\exp\left(\frac{\Delta^2}{4}+\Delta c - \frac{\gamma \Delta}{2}\right)$, multiply through by $\sqrt{2}$ as well as $\left(\frac{(\Delta+2c-\gamma)^2}{2}+1\right)$, and rearrange terms to conclude that the last expression above has the same sign as
\begin{eqnarray*}
&& \left(\frac{\Delta+2c+\gamma}{\sqrt{2}}\times\frac{\Delta+2c-\gamma}{\sqrt{2}} -  \left(\frac{(\Delta+2c-\gamma)^2}{2}+1\right)\right)\exp(\gamma \Delta) \\
&& ~~~~~~~~~~~~~~~~~~~~~~~~- \left(\frac{\Delta+2c-\gamma}{\sqrt{2}}\times\frac{\Delta+2c-\gamma}{\sqrt{2}}  -  \left(\frac{(\Delta+2c-\gamma)^2}{2}+1\right)\right)  \\
&=& \left(\frac{(\Delta+2c+\gamma)(\Delta+2c-\gamma)}{2} - \frac{(\Delta+2c-\gamma)^2}{2} - 1 \right)\exp(\gamma \Delta) + 1 \\
&=& \left(\gamma(\Delta+2c-\gamma) - 1 \right)\exp(\gamma \Delta) + 1 .
\end{eqnarray*}
At $\Delta=0$, this simplifies to $\gamma(2c-\gamma)$ and therefore is nonnegative if $2c\geq\gamma$. But one can also write
\begin{eqnarray*}
&& \frac{d}{d\Delta} \bigl(\left(\gamma(\Delta+2c-\gamma) - 1 \right)\exp(\gamma \Delta) + 1 \bigr) \\
&= & \gamma \exp(\gamma\Delta) + (\gamma(\Delta+2c-\gamma) - 1 )\gamma\exp(\gamma\Delta) \\
&= & \gamma^2(\Delta+2c-\gamma)\exp(\gamma\Delta),
\end{eqnarray*}
which is again nonnegative if $2c\geq\gamma$. Thus, $|A|/|B|$ is nondecreasing in $\Delta$ for all $\Delta\geq 0$, concluding the proof.  

\newpage

\bibliography{projection}
\newpage

\end{document}